\documentclass[12pt]{article}
\pdfoutput=1
\usepackage[utf8]{inputenc}

\usepackage{amsmath, amssymb, latexsym, amsthm}
\usepackage{fullpage}
\usepackage{color}
\usepackage{tikz}
\usepackage{graphicx}
\usepackage{stmaryrd,amsbsy,enumerate}
\usepackage{hyperref}
\usepackage{mathscinet}
\usetikzlibrary{calc}
\usepackage{cancel}
\usepackage{xspace}
\usepackage{tabularx,etoolbox}
\usepackage{url}
\usepackage{longtable}
\usepackage[protrusion=true,expansion=true]{microtype}

\newtheorem{theorem}{Theorem} 
\newtheorem{theorem*}{Theorem} 
\newtheorem{proposition}[theorem]{Proposition}

\newtheorem{lemma}[theorem]{Lemma}

\theoremstyle{definition}

\newtheorem{observation}[theorem]{Observation}
\newtheorem{question}[theorem]{Question}

\theoremstyle{remark}

\newcommand{\FPT}[0]{\ensuremath{\mathrm{FPT}}\xspace}
\newcommand{\NP}[0]{\ensuremath{\mathrm{NP}}\xspace}
\newcommand{\ZPP}[0]{\ensuremath{\mathrm{ZPP}}\xspace}
\newcommand{\Pp}[0]{\ensuremath{\mathrm{P}}\xspace}
\newcommand{\XP}[0]{\ensuremath{\mathrm{XP}}\xspace}
\newcommand{\MSOo}[0]{\ensuremath{\mathrm{MSO}_1}\xspace}

\newcommand{\prob}[4]{
\begin{center}
\bgroup
\renewcommand{\arraystretch}{1.1}%
\begin{tabularx}{\textwidth}{|llXr|}
	\hline
	\hspace{3pt}\rule{0pt}{13pt}&\multicolumn{2}{l}{#1}&\\
	&{\bf Input:\enspace}&{#2}&\hspace*{3pt}\\
	&{\bf Question:\enspace}&{#3\rule[-6pt]{0pt}{6pt}}& \\
	\hline
\end{tabularx}
\egroup
\end{center}
}

\date{}
\begin{document}

\title{Notes on complexity of packing coloring}
\author{
Minki Kim\thanks{Department of Mathematical Sciences, KAIST, Daejeon, South Korea, E-mail: {\tt kmk90@kaist.ac.kr}.}  \and
Bernard Lidick\'{y}\thanks{Department of Mathematics, Iowa State University, Ames, IA, E-mail: {\tt lidicky@iastate.edu}. Research of this author is supported in part by NSF grant DMS-1600390.}
\and
Tom\'a\v{s} Masa\v{r}\'{\i}k\thanks{ Department of Applied Mathematics of the Faculty of Mathematics and Physics at the Charles University, Prague, Czech Republic. E-mail: {\tt masarik@kam.mff.cuni.cz} Research of this author is supported by the grant SVV–2017–260452 and by the project GA17-091425 of GA \v{C}R.}
\and
Florian Pfender\thanks{Department of Mathematical and Statistical Sciences, University of Colorado Denver, E-mail: {\tt 
Florian.Pfender@ucdenver.edu}. Research of this author is supported in part by NSF grant DMS-1600483.} 
}

\maketitle

\begin{abstract}
A packing $k$-coloring for some integer $k$ of a graph $G=(V,E)$ is a mapping
 $\varphi:V\to\{1,\ldots,k\}$ such that any two vertices $u, v$ of color $\varphi(u)=\varphi(v)$ are in distance at least $\varphi(u)+1$.
This concept is motivated by frequency assignment problems.
The \emph{packing chromatic number} of $G$ is the smallest $k$ such that there exists a packing $k$-coloring of $G$.

Fiala and Golovach showed that determining the packing chromatic number for
chordal graphs is \NP-complete for diameter exactly 5.
While the problem is easy to solve for diameter 2,
we show \NP-completeness for any diameter at least 3. 
Our reduction also shows that the packing chromatic number is hard to approximate within $n^{{1/2}-\varepsilon}$ for any $\varepsilon > 0$.

In addition, we design an \FPT algorithm for interval graphs of bounded diameter.
This leads us to exploring the problem of  finding a partial coloring
that maximizes the number of colored vertices.
\end{abstract}

\section{Introduction}

Given a graph $G=(V,E)$ and an integer $k$, a \emph{packing $k$-coloring} is a mapping $\varphi:V\to\{1,\ldots,k\}$ such that any two vertices $u, v$ of color $\varphi(u)=\varphi(v)$ are in distance at least $\varphi(u)+1$.
An equivalent way of defining the packing $k$-coloring of $G$ 
is that it is a partition of $V$ into sets $V_1,\ldots,V_k$ such that for all $k$ and any $u,v \in V_k$, the distance
between $u$ and $v$ is at least $k+1$.
The \emph{packing chromatic number} of $G$, denoted $\chi_P(G)$, is the smallest $k$ such there exists a packing $k$-coloring of $G$.

The definition of packing $k$-coloring is motivated by frequency assignment problems. 
It emphasizes the fact that the signal on different frequencies can travel different distances. 
In particular, lower frequencies, modeled by higher colors, travel further so they may be used less often than higher frequencies.
The packing coloring problem was introduced by Goddard et al.~\cite{GHHHR} under the name \emph{broadcasting chromatic number}. 
The term packing coloring was introduced by Bre\v{s}ar, Klav\v{z}ar, and Rall~\cite{Bresar}.

Determining the packing chromatic number is often difficult.
For example, Sloper~\cite{Sloper} showed that the packing chromatic number of  the infinite 3-regular tree is 7
but the infinite 4-regular tree does not admit any packing coloring by a finite number of colors.
Results of Bre\v{s}ar, Klav\v{z}ar, and Rall~\cite{Bresar}  and  Fiala, Klav\v{z}ar and Lidick\'y~\cite{FKL} imply that the packing chromatic number of the infinite hexagonal lattice is 7.

Looking at these examples, researchers asked the question if there exists a constant $p$ such that every subcubic graph has packing chromatic number bounded by $p$.
A very recent result of Balogh, Kostochka and Liu~\cite{Balogh} shows that there is no such $p$ in quite a strong sense.
They show that for every fixed $k$ and $g\geq 2k+2$, almost every $n$-vertex cubic graph of girth at least $g$ has packing chromatic number greater than $k$.
It is still open if a constant bound holds for planar subcubic graphs, and no deterministic construction of subcubic graphs with arbitrarily high packing chromatic number is known.

Despite a lot of effort~\cite{FKL,GHHHR,Barnaby,Soukal}, the packing chromatic number of the square grid is still not determined.
It is known to be between 13 and 15 due to  Barnaby, Franco, Taolue, and Jos~\cite{Barnaby},
who use state of the art SAT-solvers to tackle the problem.

In this paper, we consider the packing coloring problem from the computational complexity point of view.
In particular, we study the following problem.

\prob{\sc Packing $k$-coloring of a graph}{A graph $G$ and a positive integer $k$.}{Does $G$ allow a packing $k$-coloring?}{s}

\subsection{Known results}
We characterize our algorithmic parameterized results in terms of \FPT (running time $f(k) \text{poly}(n)$) and \XP (running time $n^{f(k)}$) where $n$ is the size of the input, $k$ is the parameter and $f$ is any computable function.

The investigation of computational complexity of packing coloring was started by Goddard et al.~\cite{GHHHR} in 2008. They showed that {\sc packing $k$-coloring}  is \NP-complete for general graphs and $k=4$ and it is polynomial time solvable for $k\leq3$.
Fiala and Golovach~\cite{FG10} showed that {\sc packing $k$-coloring} is \NP-complete for trees for large $k$ (dependent on the number of vertices). 

For a fixed $k$,  {\sc packing $k$-coloring}  is expressible in \MSOo logic.  
Thus, due to Courcelle's theorem~\cite{courcelle}, it admits a fixed parameter tractable (\FPT) algorithm parameterized by the treewidth or clique width~\cite{CMR} of the graph.
Moreover, it is solvable in polynomial time if both the treewidth and the diameter are bounded~\cite{FG10}. 
The problem remains in \FPT even if we fix the number of colors that can be used more than once by the extended framework of Courcelle, Makowsky and Rotics~\cite{CMR}, see Theorem~\ref{thm:CMR}.
On the other hand, the problem is \NP-complete for chordal graphs of diameter exactly 5~\cite{FG10},
 and it is polynomial time solvable for split graphs~\cite{GHHHR}. Note that split graphs are chordal and have diameter at most 3. 
 However, {\sc packing $k$-coloring} admits an \FPT algorithm on chordal graphs parameterized by $k$~\cite{FG10}.

\subsection{Our results and structure of the paper}
We split our results into two parts.
 
In Section~\ref{sec:Chord}, we describe new complexity results on chordal, interval and proper interval graphs.
We improve a result by Fiala and Golovach~\cite{FG10} to chordal graphs of any diameter greater or equal than three. Moreover, we imply an inapproximability result (Theorem~\ref{main}). Chordal graphs of diameter less than three are polynomial time solvable (Proposition~\ref{prop:poly}).
We complement these results by several \FPT and \XP algorithms on interval and proper interval graphs. We use dynamic programming as an \XP algorithm for interval graphs of bounded diameter (Theorems~\ref{thm:diameter}).
For unit interval graphs, there is an \FPT algorithm parameterized by the size of the largest clique (Theorem~\ref{unitint}). Note that the existence of an \FPT algorithm parameterized by path-width would imply an \FPT algorithm for general interval graphs parameterized by the size of the largest clique, but this remains unknown. We provide an \XP algorithm for interval graphs parameterized by the number of colors that can be used more than once (Theorem \ref{thm:dynamic}).

In Section~\ref{sec:param}, we describe complexity results and algorithms parameterized by structural parameters.
We design \FPT algorithms for them. For standard notation and terminology we refer to the recent book about parameterized complexity~\cite{param}.

The packing coloring problem is interesting only when the number of colors is not bounded. Otherwise, we can easily 
model the problem by a fixed \MSOo formula and use the \FPT algorithm by Courcelle~\cite{courcelle} parameterized by the clique 
width of the graph. We show that we can do a similar modeling even when we fix only the number of colors that can be used 
more than once and then use a stronger result by Courcelle, Makowski and Rotics~\cite{CMR} that gives an \FPT algorithm parameterized by clique width of the graph (Theorem~\ref{thm:CMR}).

If the number of such colors is part of the input, then we can solve the problem on several structural graph classes. If a structural graph class has bounded diameter, then we can use Theorem~\ref{thm:CMR} due to the following easy observation.

\begin{observation}\label{obs}
Let $G$ be a graph of bounded diameter. Then $G$ has a bounded number of colors that can be used more than once.
\end{observation}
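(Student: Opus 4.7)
The plan is entirely straightforward: by the definition of a packing coloring, if a color $i$ is assigned to two distinct vertices $u$ and $v$, then $d_G(u,v) \geq i+1$. Since the distance between any two vertices is bounded above by $\mathrm{diam}(G)$, any color used more than once must satisfy $i+1 \leq \mathrm{diam}(G)$, i.e.\ $i \leq \mathrm{diam}(G)-1$. Consequently, the set of colors that can appear on more than one vertex is contained in $\{1, 2, \ldots, \mathrm{diam}(G)-1\}$, giving at most $\mathrm{diam}(G)-1$ such colors.

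First I would state the diameter bound $d := \mathrm{diam}(G)$ explicitly, then fix an arbitrary packing coloring $\varphi$ of $G$ and apply the defining distance inequality to any color class of size at least two. The conclusion follows immediately: the number of colors appearing more than once in $\varphi$ is at most $d-1$, and since this holds for every packing coloring, it is a structural property of $G$. If $d$ is bounded by a constant, then so is the number of repeatable colors, as claimed.

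There is no genuine obstacle here; the only thing to be careful about is to state the bound uniformly over all packing colorings, so that the observation can be fed into Theorem~\ref{thm:CMR} as a \emph{fixed} parameter depending only on $\mathrm{diam}(G)$. The observation is tight in the trivial sense that on a graph of diameter $d$ the color $d-1$ can potentially be reused (e.g.\ on the two endpoints of a diametral path), so no improvement beyond $d-1$ is possible in general.
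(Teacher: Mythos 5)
Your argument is correct and is exactly the intended one: the paper gives no explicit proof of Observation~\ref{obs}, but the bound you derive (only colors $1,\dots,\mathrm{diam}(G)-1$ can repeat, since reusing color $i$ forces two vertices at distance at least $i+1$) is precisely the reasoning the paper invokes implicitly here and uses explicitly elsewhere, e.g.\ in Proposition~\ref{prop:poly}, Theorem~\ref{main}, and Theorem~\ref{thm:diameter}. Nothing is missing.
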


This observation together with Theorem~\ref{thm:CMR} implies that the problem is \FPT for any class of graphs of bounded shrub depth. Any class of graphs that has bounded shrub depth has a bounded length of induced paths (\cite{sd}, Theorem 3.7) and thus bounded diameter. The same holds for graphs of bounded modular width as they have bounded diameter according to Observation~\ref{obs:mw}.
On the other hand, the problem was shown to be hard on graphs of bounded treewidth~\cite{FG10},  in fact the problem is \NP-hard even on trees. There seems to be a big gap and thus interesting question about parameterized complexity with respect to pathwidth of the graph. It still remains open (Question~\ref{que:pw}).
Note that the original hardness reduction by Fiala and Golovach~\cite{FG10} has unbounded pathwidth  since it contains large stars.

We refer~\cite{mw} for the definition of modular width and its construction operations.

\begin{observation}\label{obs:mw}
	Let $G$ be a graph of modular width $k$. Then $G$ has diameter at most $\max(k,2)$.
\end{observation}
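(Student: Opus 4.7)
The plan is to argue by induction on the recursive structure of the modular decomposition. Recall that a graph $G$ of modular width at most $k$ is built from single vertices using the substitution operation: at the top level, $G = H[G_1,\dots,G_m]$ where $H$ is a graph on vertices $v_1,\dots,v_m$ with $m\le k$, each $G_i$ has modular width at most $k$, and for $i\ne j$, a vertex $u\in V(G_i)$ is adjacent to $w\in V(G_j)$ precisely when $v_iv_j\in E(H)$. A single vertex is the base case. For the inductive step I would analyze distances in $G$ in terms of $H$ and the modules $G_i$.

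The key observation is that paths in $H$ lift to paths of the same length in $G$: if $v_i=v_{i_0},v_{i_1},\dots,v_{i_d}=v_j$ is a shortest path in $H$, then picking any vertex in each $G_{i_s}$ gives a walk from any $u\in G_i$ to any $w\in G_j$ of length $d$. Hence for $u\in G_i$ and $w\in G_j$ with $i\ne j$, we have $d_G(u,w)\le d_H(v_i,v_j)$. For $u,w$ lying in the same module $G_i$, if $v_i$ has any neighbor $v_\ell$ in $H$, then both $u$ and $w$ are adjacent to every vertex of $G_\ell$ and therefore $d_G(u,w)\le 2$.

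Now I would combine these two bounds. If $G$ is connected and $m\ge 2$, then $H$ itself must be connected, since otherwise a split of $\{v_1,\dots,v_m\}$ into components of $H$ would give a partition of $V(G)$ with no edges between the parts. A connected graph on $m\le k$ vertices has diameter at most $m-1\le k-1$, covering the between-module case. Moreover connectedness of $H$ with $m\ge 2$ guarantees that every $v_i$ has an external neighbor, so the within-module distances are at most $2$. Together this yields $\mathrm{diam}(G)\le\max(k-1,2)\le\max(k,2)$. The trivial case $m=1$ is a no-op substitution and the bound then follows from the inductive hypothesis applied to $G_1$.

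The only subtlety is the disconnected case: if $G$ is disconnected its diameter is infinite, so the observation is naturally read as a statement about each connected component (which inherits modular width at most $k$ from the decomposition and hence satisfies the bound). This reading is what is needed for the subsequent invocation in Observation~\ref{obs} and Theorem~\ref{thm:CMR}. No serious obstacle arises; the main care is in arguing that $H$ must be connected whenever $G$ is, which is what makes $m-1\le k-1$ the binding bound on inter-module distances.
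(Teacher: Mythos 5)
Your proposal is correct and follows essentially the same route as the paper: examine the top-level (template/join) operation of the decomposition, bound distances between distinct modules by distances in the quotient graph on at most $k$ vertices, and bound distances within a module by $2$ using an external neighbor. Your version adds the (correct) refinements that the quotient graph must be connected, giving the slightly sharper inter-module bound $k-1$, and an explicit treatment of the disconnected and trivial cases, but the underlying argument is the paper's.
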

\begin{proof}
	We look at the last step of the decomposition. It has to create a connected graph and thus it is either a join operation or a template operation.
	If it is the join operation then the diameter is at most $2$ and if it is the template operation the longest path between any two vertices in different operands is at most $k$ and if they are in the same operand their distance is at most $2$.
\end{proof}

See Figure~\ref{fig:classes} for an overview of the results with respect to the structural parameters.

\section{Chordal and Interval graphs}\label{sec:Chord}

\begin{proposition}\label{prop:poly}
Packing chromatic number is in $P$ for chordal graphs of diameter 2.
\end{proposition}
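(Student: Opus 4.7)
The plan is to observe that in a graph $G$ of diameter at most $2$, the packing constraint collapses to an essentially trivial form: if two vertices share color $c$, they must be at distance at least $c+1$, so $c+1 \le \mathrm{diam}(G) \le 2$, forcing $c = 1$. Hence in any packing coloring of $G$, color class $V_1$ is an independent set and every color class $V_c$ with $c \ge 2$ has size at most one.

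From this structural observation I would deduce the exact formula
\[
\chi_P(G) \;=\; |V(G)| - \alpha(G) + 1,
\]
where $\alpha(G)$ is the independence number. The lower bound is immediate from the previous paragraph: any packing $k$-coloring places at most $\alpha(G)$ vertices in $V_1$ and at most one in each remaining class, so $k \ge 1 + (n - \alpha(G))$. For the matching upper bound, take a maximum independent set $I$, color its vertices with $1$, and give the remaining $n - \alpha(G)$ vertices distinct colors from $\{2, 3, \ldots, n - \alpha(G) + 1\}$; this is a valid packing coloring since no color $\ge 2$ is repeated.

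The final step is to invoke the fact that chordal graphs are perfect, and in particular the maximum independent set problem is solvable in polynomial (indeed linear) time on chordal graphs via a perfect elimination ordering. Combining this with the formula above yields a polynomial-time algorithm that computes $\chi_P(G)$ by a single call to a maximum independent set routine.

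There is essentially no technical obstacle here; the only point that needs a moment of care is the direction forcing $c=1$ from $c+1 \le \mathrm{diam}(G)$, and the implicit handling of the degenerate cases of diameter $0$ or $1$ (a single vertex or a complete graph), where the formula still gives $\chi_P(G) = 1$ and $\chi_P(G) = n$ respectively, both correct. Chordality is used only to compute $\alpha(G)$ efficiently; the structural formula itself holds for every diameter-$2$ graph.
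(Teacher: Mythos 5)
Your proposal is correct and follows the same route as the paper: observe that in a diameter-$2$ graph only color $1$ can be repeated, reduce the computation to a maximum independent set via the formula $\chi_P(G)=n-\alpha(G)+1$, and use that maximum independent set is polynomial on chordal graphs. You merely make explicit the formula and the degenerate cases that the paper leaves implicit.
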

\begin{proof}
Let $G$ be a chordal graph of diameter $2$. 
Notice that in graphs of diameter $2$, the only color 
that can be used more than once is color 1.
Hence, determining the packing chromatic number of $G$ is equivalent to finding a largest independent set in $G$.
In chordal graphs, the larges independent set can be found in polynomial time.
Hence $\chi_P(G)$ can be found in polynomial time.
\end{proof}

For larger diameters, we use a similar reduction as Fiala and Golovach~\cite{FG10} to finding a largest independent set in a general graph. 
\ZPP is a complexity class of problems which can be solved in expected polynomial time by a probabilistic algorithm that never makes an error. 
It lies between $\Pp$ and $\NP$ ($\Pp\subseteq \ZPP \subseteq \NP$). 
It is strongly believed that $\ZPP\neq \NP$.
H\aa{}stad~\cite{Haastad1999} showed that finding a largest independent set  is  hard to approximate.
\begin{theorem}[H\aa{}stad \cite{Haastad1999}]\label{thm:hastad}
Unless $\NP = \ZPP$, Max-Clique cannot be approximated within $n^{1-\varepsilon}$ for any $\varepsilon >0$.
\end{theorem}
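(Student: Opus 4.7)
Since this is a classical hardness-of-approximation result quoted from H\aa{}stad, a fully self-contained proof is well beyond the scope of this paper; the plan is to sketch the three ingredients on which the standard proof rests. The overall route is to produce a probabilistically checkable proof (PCP) with very strong soundness guarantees for an \NP-complete language, and then convert it into a Max-Clique instance by the FGLSS reduction so that the PCP soundness gap becomes a Max-Clique gap.

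First I would set up the FGLSS reduction: given a PCP verifier for an \NP-complete language using $r$ random bits and $q$ queries with completeness $c$ and soundness $s$, build a graph whose vertices are the accepting local views, one per random string and per answer pattern that makes the verifier accept, with edges between pairwise consistent views. Then the maximum clique has size $c \cdot 2^{r}$ on YES instances and at most $s \cdot 2^{r}$ on NO instances, the graph has $n \le 2^{r+q}$ vertices, and the inapproximability factor is $c/s$. To obtain an $n^{1-\varepsilon}$ factor I therefore need $r = \Theta(\log n)$, $q = O(\log n)$, and soundness essentially $2^{-r(1-\varepsilon)}$.

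Second, I would boost the constant soundness gap given by the PCP theorem via Raz's parallel repetition applied to a two-prover one-round game, combined with H\aa{}stad's Long Code test and its Fourier-analytic soundness analysis. The main obstacle is exactly this step: the soundness must decay exponentially in the number of repetitions while the query complexity stays logarithmic, and controlling both at once is the technical heart of H\aa{}stad's paper. The \ZPP hypothesis rather than $\Pp$ enters because the resulting FGLSS construction is randomized, and its full derandomization is not known unconditionally; consequently a polynomial-time approximation within $n^{1-\varepsilon}$ would only force $\NP = \ZPP$ and not $\NP = \Pp$.
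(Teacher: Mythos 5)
The paper does not prove this statement at all: it is imported verbatim from H\aa{}stad \cite{Haastad1999} and used as a black box in the proof of Theorem~\ref{main}, so the only ``proof'' the paper offers is the citation. Your sketch of how the cited result is actually established is broadly faithful to H\aa{}stad's argument: the FGLSS reduction from PCPs to Max-Clique, amplification via Raz's parallel repetition of a two-prover one-round game, and the Long Code test with its Fourier-analytic soundness analysis are indeed the three pillars, and your accounting of clique sizes $c\cdot 2^r$ versus $s\cdot 2^r$ in the FGLSS graph is correct.

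Two technical points in your sketch deserve correction, though neither changes the outline. First, the parameter that governs the achievable exponent in the FGLSS reduction is the amortized \emph{free bit} complexity, not the query complexity: the graph has at most $2^{r+f}$ vertices, where $2^{f}$ bounds the number of accepting answer patterns per random string, and the hardness factor obtainable is $n^{1/(1+\bar f)-\delta}$ with $\bar f = f/\log_2(1/s)$. H\aa{}stad's contribution is a verifier with $\bar f$ arbitrarily small; a test with many queries but few free bits is perfectly acceptable, so demanding $q=O(\log n)$ is aiming at the wrong parameter. Second, the FGLSS construction itself is deterministic once the verifier is fixed; the randomness that forces the hypothesis $\NP=\ZPP$ rather than $\NP=\Pp$ enters in the randomness-efficient amplification step, where the verifier's random bits are recycled by random sampling so that $r$ stays logarithmic while the soundness is driven down to roughly $2^{-r(1-\varepsilon)}$. (Zuckerman's later derandomization of exactly this step upgrades the conclusion to genuine $\NP$-hardness, which is why the $\ZPP$ form is the one stated in 1999 and the one the authors quote.)
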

Together with our reduction, this implies that the packing chromatic number is hard to approximate.

\begin{theorem}\label{main}
Packing chromatic number is \NP-complete on chordal graphs of any diameter at least 3. Moreover, it is hard to approximate within $n^{1/2-\varepsilon}$ for any $\varepsilon >0$, unless $\NP = \ZPP$.
\end{theorem}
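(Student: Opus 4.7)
The plan is a polynomial-time Karp reduction from Max-Independent-Set on a general graph $H$ with $n$ vertices to the packing chromatic number on a chordal graph $G$ of any prescribed diameter $d\ge 3$, in the spirit of Fiala and Golovach's diameter-$5$ reduction. The construction will produce $G$ with $N=\Theta(n^2)$ vertices and will satisfy $\chi_P(G)=N-\alpha(H)+c(d)$ for a constant $c(d)$ depending only on $d$. The quadratic blow-up $N=\Theta(n^2)$ then converts H\aa{}stad's $n^{1-\varepsilon}$ inapproximability for Max-Clique (Theorem~\ref{thm:hastad}) into an $N^{1/2-\varepsilon}$ inapproximability for $\chi_P$, via the linear relationship.

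Concretely, I would construct $G$ in three layers. A \emph{chordal spine}: a short path of cliques of length $d-2$, making $G$ chordal with diameter exactly $d$. A \emph{representative layer}: $n$ distinguished vertices $x_1,\dots,x_n$ attached to the spine, one per vertex of $H$, whose mutual short-range distances (at most $2$) in $G$ will encode the edge set of $H$ via small, internally chordal edge-gadgets attached along cliques. A \emph{padding layer}: each $x_i$ carries a private clique of size $\Theta(n)$; because the spine shrinks distances, every padding-interior vertex is close to almost everything and therefore must receive a unique, high color. Then in any optimal packing coloring, the only color used more than a bounded number of times is color $1$, and the set of $x_i$'s that receive color $1$ must be pairwise non-adjacent in $G$, which by design equals an independent set in $H$. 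A color-counting argument then yields $\chi_P(G)=N-\alpha(H)+c(d)$, and choosing the maximum-size class of $x_i$'s that can share color $1$ amounts to solving Max-Independent-Set on $H$.

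The main obstacle is the representative layer: I cannot simply take $G[\{x_1,\dots,x_n\}]\cong H$, because an arbitrary $H$ contains induced cycles of length $\ge 4$ that would destroy chordality. The remedy is to encode each edge $(v_i,v_j)\in E(H)$ through a private, internally chordal gadget attached on a small clique, designed so that the gadget only forces distance at most $2$ between $x_i$ and $x_j$ and contributes no induced cycle of length $\ge 4$ to the global graph. Verifying chordality under this encoding---ruling out induced long cycles formed by combinations of edge-gadgets, spine, and padding---is the single technical hurdle; once it is discharged, the two directions of the reduction (small $\chi_P$ from large $\alpha(H)$, and large $\alpha(H)$ from small $\chi_P$) follow from the counting argument, and extending from diameter $3$ to any $d\ge 3$ is handled by lengthening the chordal spine, which adds $d-3$ further forced singleton colors without affecting the $\alpha(H)$-dependent part.
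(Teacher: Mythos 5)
Your high-level plan --- reduce from Max-Independent-Set, blow the instance up quadratically into a chordal graph, and convert H\aa{}stad's $n^{1-\varepsilon}$ bound into an $N^{1/2-\varepsilon}$ bound --- is exactly the paper's. But the step you defer, the edge-encoding gadget, is where the proof actually lives, and as described it is internally inconsistent. You want the set of $x_i$'s receiving color $1$ to recover an independent set of $H$; color $1$ only forbids two like-colored vertices at distance $1$, so this forces $x_i$ and $x_j$ to be \emph{adjacent} in $G$ whenever $v_iv_j\in E(H)$ --- which reintroduces the induced long cycles of $H$ on $\{x_1,\dots,x_n\}$ and cannot be repaired by gadgets attached elsewhere, since a chord of an induced cycle on the $x_i$'s must itself join two $x_i$'s. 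At the same time you say the gadget ``only forces distance at most $2$'' between $x_i$ and $x_j$, but distance $2$ is perfectly compatible with both vertices having color $1$, so that version encodes nothing. The paper resolves this tension by using color $2$, not color $1$, as the independent-set color: subdivide every edge of the source graph and turn the set $S$ of subdivision vertices into a clique. The original vertices then form an independent set, chordality is immediate ($S$ is a clique and every other vertex's neighborhood lies essentially inside $S$), two original vertices are at distance exactly $2$ iff they were adjacent, and the color-$2$ class restricted to $V$ is precisely an independent set of the source graph. Pendant duplicates are added so that color $1$ is consumed on them and cannot be used on $V$.

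Separately, your assertion that in an optimal coloring every padding-interior vertex ``must receive a unique, high color'' and that only color $1$ repeats is not automatic: a large clique can still absorb one vertex of color $1$ and one of color $2$, and a priori an optimal coloring might spend the low colors inside the cliques rather than on the representatives, breaking the identity $\chi_P(G)=N-\alpha(H)+c(d)$ in one direction. The paper pins down the structure with an explicit exchange argument: choose $\varphi$ to maximize the number of unique colors on $S$ (and then the number of $1$'s on the duplicates), and show that any low color on $s\in S$ can be swapped onto a neighbor $u$ with $N[u]\subseteq N[s]$, so distances only improve. Without such a normalization your counting gives only an inequality. So the skeleton is right, but the two missing pieces --- a chordality-preserving encoding (which forces the switch to color $2$ and distance $2$) and the exchange argument --- are the substance of the proof.
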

\begin{proof}
We use a reduction to the independent set  problem.
Let $G$ be any connected graph on $n$ vertices.
We construct a chordal graph $H$ of diameter $d\ge 3$ from $G$ by the following sequence of operations:
\begin{enumerate}[(a)]
\item start with $G$, denote the set of its vertices by $V$,
\item subdivide every edge once, denote the set of new vertices by $S$,
\item add all possible edges between vertices in $S$,
\item for every $v \in V$ add a duplicate vertex $v'$ and the edge $vv'$; denote the set of new duplicate vertices by $D$, 
\item to increase the diameter to $d>3$, add a path $P$ of length $d-2$ starting in one vertex in $S$. \label{addX}
\end{enumerate}
See Figure~\ref{reduction} for an example of the construction.
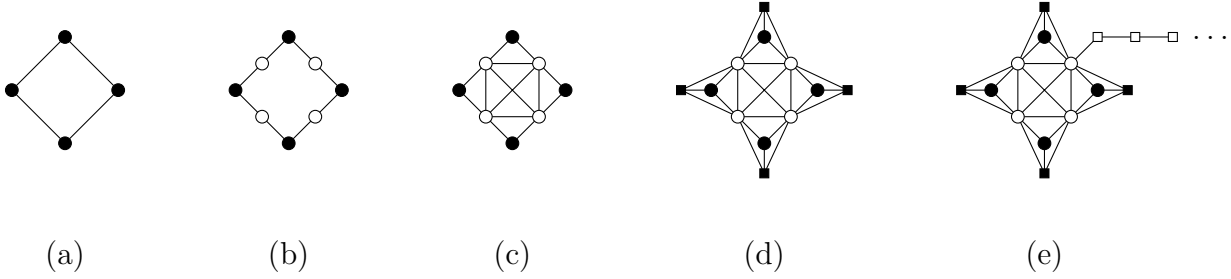
\begin{figure}
\begin{center}
\tikzset{vtx/.style={inner sep=1.7pt, outer sep=0pt, circle, draw,fill}} 
\tikzset{vtxS/.style={inner sep=1.7pt, outer sep=0pt, circle,draw,fill=white}} 
\tikzset{vtxD/.style={inner sep=1.7pt, outer sep=0pt, rectangle,draw,fill}} 
\tikzset{vtxP/.style={inner sep=1.7pt, outer sep=0pt, rectangle,draw,fill=white}} 
\begin{tikzpicture}
\draw(0,0) node[vtx](a){} 
(a) -- (45:1) node[vtx](b){}
(a) -- (135:1) node[vtx](c){}
(c) -- ++(45:1) node[vtx](d){}
(b)--(d);
\draw(0,-1.5) node {(a)};
\end{tikzpicture}
\hskip 3em
\begin{tikzpicture}
\draw(0,0) node[vtx](a){} 
(a) -- (45:1) node[vtx](b){}
(a) -- (135:1) node[vtx](c){}
(c) -- ++(45:1) node[vtx](d){}
(b)--(d)
($(a)!0.5!(b)$) node[vtxS](x1){}
($(a)!0.5!(c)$) node[vtxS](x2){}
($(c)!0.5!(d)$) node[vtxS](x3){}
($(b)!0.5!(d)$) node[vtxS](x4){}
;
\draw(0,-1.5) node {(b)};
\end{tikzpicture}
\hskip 3em
\begin{tikzpicture}
\draw(0,0) node[vtx](a){} 
(a) -- (45:1) node[vtx](b){}
(a) -- (135:1) node[vtx](c){}
(c) -- ++(45:1) node[vtx](d){}
(b)--(d)
($(a)!0.5!(b)$) node[vtxS](x1){}
($(a)!0.5!(c)$) node[vtxS](x2){}
($(c)!0.5!(d)$) node[vtxS](x3){}
($(b)!0.5!(d)$) node[vtxS](x4){}
(x1)--(x2)--(x3)--(x1)--(x4)--(x2)
(x3)--(x4)
;
\draw(0,-1.5) node {(c)};
\end{tikzpicture}
\hskip 3em
\begin{tikzpicture}
\draw(0,0) node[vtx](a){} 
(a) -- (45:1) node[vtx](b){}
(a) -- (135:1) node[vtx](c){}
(c) -- ++(45:1) node[vtx](d){}
(b)--(d)
($(a)!0.5!(b)$) node[vtxS](x1){}
($(a)!0.5!(c)$) node[vtxS](x2){}
($(c)!0.5!(d)$) node[vtxS](x3){}
($(b)!0.5!(d)$) node[vtxS](x4){}
(x1)--(x2)--(x3)--(x1)--(x4)--(x2)
(x3)--(x4)
(a) -- ++(270:0.4) node[vtxD](da){}
(b) -- ++(0:0.4) node[vtxD](db){}
(c) -- ++(180:0.4) node[vtxD](dc){}
(d) -- ++(90:0.4) node[vtxD](dd){}
(x1)--(da)--(x2)
(x1)--(db)--(x4)
(x2)--(dc)--(x3)
(x3)--(dd)--(x4)
;
\draw(0,-1.5) node {(d)};
\end{tikzpicture}
\hskip 3em
\begin{tikzpicture}
\draw(0,0) node[vtx](a){} 
(a) -- (45:1) node[vtx](b){}
(a) -- (135:1) node[vtx](c){}
(c) -- ++(45:1) node[vtx](d){}
(b)--(d)
($(a)!0.5!(b)$) node[vtxS](x1){}
($(a)!0.5!(c)$) node[vtxS](x2){}
($(c)!0.5!(d)$) node[vtxS](x3){}
($(b)!0.5!(d)$) node[vtxS](x4){}
(x1)--(x2)--(x3)--(x1)--(x4)--(x2)
(x3)--(x4)
(a) -- ++(270:0.4) node[vtxD](da){}
(b) -- ++(0:0.4) node[vtxD](db){}
(c) -- ++(180:0.4) node[vtxD](dc){}
(d) -- ++(90:0.4) node[vtxD](dd){}
(x1)--(da)--(x2)
(x1)--(db)--(x4)
(x2)--(dc)--(x3)
(x3)--(dd)--(x4)
(x4)--++(45:0.5) node[vtxP]{} -- ++(0.5,0)node[vtxP]{}  -- ++(0.5,0)node[vtxP]{} ++(0.5,0)node[]{$\hdots$}
;
\draw(0,-1.5) node {(e)};
\end{tikzpicture}
\end{center}
\caption{The reduction from Theorem~\ref{main} on a $4$-cycle.}\label{reduction}
\end{figure}

We will choose   a packing coloring $\varphi$ of $H$ with $\chi_P(H)$ colors.
Notice that the graph induced by $V \cup S \cup D$ has diameter at most three. 
Hence, only colors 1 and 2 can be used more than once on  $V \cup S \cup D$.
We call colors other than 1 and 2 \emph{unique}.
Notice that we can freely permute the unique colors. 
Pick $\varphi$ in a way to maximize the number of unique colors among vertices in $S$,
and subject to that, to maximize the number of vertices in $D$ colored  1.
We will show that $S$ has only vertices of unique colors and all vertices in $D$ are colored  1.

Suppose for the sake of contradiction that there is a vertex $s \in S$ colored  1 or 2.
Since $S$ is a clique, $s$ is the only vertex in $S$ with this color.
Let $u \in D \cup V$  be a neighbor of $s$ with a unique color. 
Such a vertex must exist since $s$ has four neighbors in $D \cup V$, and at most two can be colored by 1 and 2.
Observe that by the construction of $H$, the closed neighborhood $N[u]\subseteq N[s]$.
Thus, for every vertex $w\ne u$, the distance $d(w,u)\le d(w,s)$.
Hence, we can swap the colors on $s$ and $u$, contradicting the choice of $\varphi$.
Therefore, all vertices in $S$ have unique colors.

Now let $x\in D$ and let $v$ be its unique neighbor in $V$. If $v$ has color 1, we can swap the colors on $x$ and $v$, contradicting our choice of $\varphi$. Therefore, no vertices in $N(x)$ have color 1, and thus $x$ has color 1 by our choice of $\varphi$.

Since all vertices in $D$ are colored  1, no vertex in $V$ can be colored 1.
Minimizing the number of unique colors on $V$ is the same as maximizing the number of vertices colored 2. 
By the distance constraints in $H$, a subset of $V$ can be colored 2 in $H$ if and only if it is an independent set in $G$.
Therefore, the vertices colored $2$ in $V$ form a largest independent set in $G$. 

Recall that in order to increase the diameter of $H$, we added the path $P$ with one endpoint $s \in S$ in step \eqref{addX}.
Notice that $P$ can be colored by a pattern of four colors starting in  $s$: $\varphi(s),1,2,1,3,1,2,1,3,1\ldots$.
The existence of the path neither increases $\chi_P(H)$ nor influences the coloring $\varphi$ in $V \cup D \cup S$.

Finally, notice that $H$ has at most ${n\choose 2}+2n+d-2$ vertices. 
Hence, if we could approximate  $\chi_P(H)$ with precision $(n^2)^{1/2-\varepsilon}$ for some $\varepsilon >0$, we could
approximate largest independent set in $G$ with precision $n^{1-2\varepsilon}$, which contradicts
Theorem~\ref{thm:hastad}.
\end{proof}

\begin{theorem}\label{thm:diameter}
Packing chromatic number for interval graphs of diameter at most $d$ can be solved in time $O(n^{d\ln (5d)})$.
\end{theorem}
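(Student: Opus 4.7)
The plan is to reduce the problem, via a structural bound on $c$-packings in bounded-diameter interval graphs, to enumerating polynomially many candidate color classes for colors $2,\ldots,d-1$, and then resolving color $1$ and the unique colors $\geq d$ by polynomial-time subroutines. The starting observation is that since $\operatorname{diam}(G)\leq d$, every color $c\geq d$ is used at most once in any packing coloring: two vertices sharing such a color would lie at distance at least $c+1>d$. Hence every packing coloring is determined by a tuple $(V_1,\ldots,V_{d-1})$ of color classes (each $V_c$ a $c$-packing) together with the set $U=V\setminus\bigcup_c V_c$ of vertices receiving unique colors, and $\chi_P(G)$ equals the minimum over valid tuples of $\max\bigl(\max\{c\leq d-1 : V_c\neq\emptyset\},\,|U|+|\{c\leq d-1 : V_c\neq\emptyset\}|\bigr)$, since after assigning $V_c$ to color $c$ the unique vertices can always be greedily placed into the smallest available slots.

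The key structural input is the bound that in an interval graph of diameter at most $d$, any $c$-packing with $c\geq 2$ has size $O(d/c)$, and consequently $\sum_{c=2}^{d-1}|V_c|\leq d\ln(5d)$ in every valid tuple, with $\ln(5d)$ absorbing harmonic-sum and rounding constants. The intuition is that if we sort a $c$-packing $u_1,\ldots,u_m$ by their supports in the clique-path representation of $G$, then consecutive $u_i,u_{i+1}$ must be at graph distance at least $c+1$, which forces a gap of at least $c$ between the last clique containing $u_i$ and the first clique containing $u_{i+1}$; a BFS-based bound on how far the clique-path can be traversed within the diameter budget then caps $m$.

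With this bound in hand, the algorithm enumerates every tuple $(V_2,\ldots,V_{d-1})$ of pairwise disjoint subsets of $V(G)$ with $\sum|V_c|\leq d\ln(5d)$, of which there are $n^{O(d\ln(5d))}$. For each tuple we (i)~verify that each $V_c$ is a valid $c$-packing by computing pairwise distances; (ii)~delete the chosen vertices and compute a maximum independent set $V_1$ in the remaining interval graph via the classical polynomial-time algorithm; (iii)~set $U=V\setminus(V_1\cup\bigcup_{c\geq 2}V_c)$ and evaluate the formula above. Returning the minimum across all enumerations yields $\chi_P(G)$, in total time $n^{O(d\ln(5d))}$.

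The main obstacle is establishing the structural bound on $|V_c|$. Linking graph distance in an interval graph to the clique-path order is delicate, because two intervals can be close in the linear order yet far in graph distance (when the intermediate intervals fail to bridge them), or vice versa, as witnessed for instance by a star, where the clique path is long but the diameter is $2$. The cleanest route is to work directly in the clique-path representation, where every vertex occupies a contiguous range of maximal cliques, and to show that for $c\geq 2$ the supports of consecutive packing vertices must be separated by at least $c$ cliques; combining this with the observation that from any fixed vertex the BFS reaches all of $V(G)$ within $d$ steps, and hence cannot accommodate too many such separations, yields the claimed bound.
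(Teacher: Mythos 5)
Your algorithmic shell is exactly the paper's: colors $\geq d$ are unique, the total number of vertices receiving a repeated color $\geq 2$ is at most $d\ln(5d)$, so one enumerates the $n^{O(d\ln(5d))}$ partial colorings by colors $2,\ldots,d-1$, solves maximum independent set for color $1$ in polynomial time on the remaining interval graph, and fills in unique colors greedily (your $\max$-formula for the resulting value is correct). The one thing that actually needs proof — the bound $|V_c|\leq O(d/c)$ for a $c$-packing with $c\geq 2$ — is precisely the step you leave as a sketch, and the sketch as written does not close. You propose to show that consecutive packing vertices are separated by at least $c$ cliques in the clique path and then argue that ``BFS cannot accommodate too many such separations'' within diameter $d$. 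But large gaps in the clique path cost no diameter budget at all: your own star example has a clique path with arbitrarily many arbitrarily large gaps while the diameter stays $2$, because a single long interval bridges any number of cliques in one hop. So counting clique-path separations against $d$ is not a valid accounting scheme, and this is the heart of the lemma, not a routine detail.

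Two ways to repair it. The paper's route: take a diametral shortest path $P$ (so $P$ is isometric, with $d-1$ internal vertices); every vertex of $G$ has a neighbor among the internal vertices of $P$ in a suitable interval representation, so each $x\in V_c$ projects to some $p(x)\in V(P)$ with $d(p(x_1),p(x_2))\geq d(x_1,x_2)-2\geq c-1$, and packing points pairwise $(c-1)$-separated on a path of length $d-2$ gives $|V_c|\leq \frac{d-2}{c-1}+1$. Alternatively, staying closer to your interval-order idea: if $u,v,w$ are pairwise nonadjacent and $v$'s interval lies between $u$'s and $w$'s, then any shortest $u$--$w$ path covers $v$'s interval and hence contains a neighbor of $v$, giving $d(u,w)\geq d(u,v)+d(v,w)-2$; iterating over a sorted $c$-packing $u_1,\ldots,u_m$ yields $d\geq (m-1)(c+1)-2(m-2)$ and the same bound on $m$. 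Either argument replaces the missing ``BFS'' step; without one of them the exponent $d\ln(5d)$ is unsupported.
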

\begin{proof}
Let $\varphi$ be a packing coloring of an interval graph $G$ with diameter $d$, and let $P$ be a diameter path in $G$.
Note that every interval corresponding to a vertex of $G$ intersects an interval corresponding to an internal vertex of $P$.
Suppose $X$ is a set colored by color $c \geq 2$ in $\varphi$.
Let $x_1,x_2\in X$, and let $p_1,p_2\in V(P)$ such that $x_1p_1,x_2p_2\in E(G)$. Then the distance between $p_1$ and $p_2$ is at least $c-1$.
Therefore, $|X|\le \frac{d-2}{c-1}+1$.

Therefore, only colors $1,\ldots,d-1$ can be used more than once by $\varphi$.
Notice that the number of vertices colored by $2,\ldots,d-1$ is upper bounded by 
\[
f(d) = \sum_{2\le c\le d-1}\left(\frac{d-2}{c-1}+1\right) =(d-2)(1+H(d-2))<d\ln (5d)-1,
\]
where $H(n)$ is the harmonic number.
There are at most $n^{f(d)}$ such partial colorings of $G$ by colors $2,\ldots,d-1$.
Finally, vertices colored by 1 form an independent set. 
Therefore, the following is an algorithm to find the packing chromatic number of $G$.

Enumerate all  $n^{f(d)}$ partial colorings by colors $2,\ldots,d-1$. 
For each partial coloring, find a maximum independent set in the remaining graph, which takes time $O(n)$ and color the remaining vertices with unique colors.
The whole algorithm runs in time $O(n^{f(d)+1}) = O(n^{d\ln (5d)})$.
\end{proof}

When restricting the class of graphs to unit interval graphs, we can find an FPT algorithm parametrized by the size of the largest clique, independent of diameter.
We need the following two results. 

\begin{lemma}[Goddard et al.~\cite{GHHHR}]\label{lem:path}
For every $s\in \mathbb{N}$, the infinite path can be colored by colors $s,s+1,\ldots,3s+2$.
\end{lemma}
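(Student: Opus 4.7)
My plan is to construct an explicit periodic packing coloring of $\mathbb{Z}$ using only colors from $\{s,s+1,\ldots,3s+2\}$. Equivalently, for each color $c$ I need to specify a set of residues modulo some period $p$ at which $c$ is placed, such that the gaps between consecutive same-colored positions (including across the wrap-around) are all at least $c+1$.

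The first step is to place color $s$ on the anchor progression $\{k(s+1):k\in\mathbb{Z}\}$, which uses color $s$ at its maximum allowed density $\tfrac{1}{s+1}$ with the distance constraint tight, and leaves exactly the $s$ nonzero residue classes modulo $s+1$ still uncolored (total density $\tfrac{s}{s+1}$). The second step is to pick a period $p$ that is a multiple of $s+1$ and sufficiently large, and then to place the remaining $2s+2$ colors greedily in increasing order of $c$: for each $c$, place it on up to $\lfloor p/(c+1)\rfloor$ currently-uncolored positions within one period, always spaced at distance at least $c+1$. The density sum
\[
\sum_{c=s+1}^{3s+2}\frac{1}{c+1} \;=\; H_{3s+3}-H_{s+1} \;\longrightarrow\; \ln 3 \;>\; 1 \quad\text{as } s\to\infty,
\]
exceeds $\tfrac{s}{s+1}$ for every $s\geq 1$ by a positive margin approaching $\ln 3-1>0$, so in principle there is enough capacity to finish the covering.

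The main obstacle is certifying that this greedy step actually succeeds for all $s$: when color $c$ is being placed, the previously-placed colors may fragment the uncolored residues in a way that blocks fitting $\lfloor p/(c+1)\rfloor$ copies of $c$ within one period. For small $s$ the short period $p=3(s+1)$ already suffices and each remaining color can be assigned to its own slot, since then $3s\leq 2s+2$. For larger $s$, where the number of uncolored slots per period exceeds the number of available non-anchor colors, one must ``double up'' the smallest non-anchor colors $s+1,s+2,\ldots$, placing two copies per period at distance at least $c+1$ and at most $p-(c+1)$; this can always be arranged by enlarging $p$ until the rounding losses in $\sum\lfloor p/(c+1)\rfloor$ are absorbed by the density slack. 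Verifying the final coloring is then immediate, since any two vertices sharing color $c$ lie in a common arithmetic structure whose gaps are all at least $c+1$.
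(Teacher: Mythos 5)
There is a genuine gap here, and you in fact point at it yourself: the density computation $\sum_{c=s+1}^{3s+2}\frac{1}{c+1}=H_{3s+3}-H_{s+1}>\frac{s}{s+1}$ is correct but is only a necessary condition, and the step that would make it sufficient --- showing that the greedy placement really does fit $\lfloor p/(c+1)\rfloor$ copies of each color $c$ into the positions left uncolored by the earlier colors --- is asserted rather than proved. Greedy placement of color $c$ achieves $\lfloor p/(c+1)\rfloor$ copies only when the currently uncolored positions are spread essentially uniformly around the period; nothing in your argument controls how the earlier colors fragment the uncolored set. This matters most at the end of the process: the last colors require spacing on the order of $3s$, so if the few positions still uncolored when only color $3s+2$ remains happen to contain two points at distance less than $3s+3$, the construction fails outright, and ``enlarging $p$'' does not repair clustering --- it only shrinks the relative rounding loss, which was never the obstruction. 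The explicit parts of your argument cover only very small $s$: the inequality $3s\le 2s+2$ holds only for $s\le 2$, and with period $p=3(s+1)$ a color $c$ can be doubled only when $2(c+1)\le 3s+3$, i.e.\ $c\le(3s+1)/2$, which supplies roughly $s/2$ extra slots against the $s-2$ you need, so even the doubling patch runs out around $s=4$. What is missing is precisely the content of the lemma: a construction, valid for every $s$, of an explicit assignment (periodic or otherwise) together with a verification of all the distance constraints. For comparison, the paper does not prove this statement at all --- it is quoted from Goddard et al.~\cite{GHHHR}, where it is established by such an explicit construction --- so your write-up would need the combinatorial placement argument completed before it could stand as a proof.

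One concrete suggestion if you want to salvage your anchoring idea: enumerate the non-multiples of $s+1$ in increasing order as $(u_i)_{i\in\mathbb{Z}}$ and observe that $u_{i+t}-u_i\ge t$; hence placing color $c$ on $\{u_i: i\equiv r_c \pmod{m_c}\}$ with $m_c\ge c+1$ automatically satisfies the distance constraint, which eliminates the fragmentation issue caused by the anchor color. But this reduces the problem to partitioning the index set $\mathbb{Z}$ among the colors $s+1,\dots,3s+2$ with the same kind of spacing requirement, so it does not by itself close the gap --- you still need an honest construction for that residual problem.
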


\begin{proposition}[Fiala and Golovach~\cite{FG10}]\label{chordfpt}
Chordal graphs admit an \FPT algorithm parameterized by the number of colors used in the solution. 
\end{proposition}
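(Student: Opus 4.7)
The plan rests on a simple but key observation: if $G$ admits a packing $k$-coloring, then every clique in $G$ has size at most $k$. Indeed, any two vertices in a clique are at distance $1$, so in a packing coloring they must receive distinct colors (since color $c$ demands pairwise distance at least $c+1 \ge 2$). Hence $\omega(G) \le k$.

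For a chordal graph $G$, the treewidth equals $\omega(G) - 1$ and a tree decomposition realizing this width can be computed in polynomial time (from any clique tree). Therefore I would first compute $\omega(G)$ in polynomial time; if $\omega(G) > k$, output \textsc{no}. Otherwise, produce a tree decomposition of width at most $k - 1$.

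With a tree decomposition of width bounded in the parameter $k$ in hand, I would solve \textsc{Packing $k$-coloring} either by direct dynamic programming on the decomposition or, more cleanly, by invoking Courcelle's theorem. For each fixed $k$, the property ``$G$ has a packing $k$-coloring'' is expressible in $\MSOo$: one asserts the existence of a vertex partition $V_1, \ldots, V_k$ such that for each $i \in \{1,\ldots,k\}$ and each pair of distinct $u, v \in V_i$, there is no path of length $\le i$ between them. Since $k$ is the parameter, the formula has bounded size, and ``path of length at most $i$'' for fixed $i \le k$ is $\MSOo$-definable (in fact, quantifier-free over adjacency after encoding the $\le i$-step neighborhood). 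Applying Courcelle's theorem on the bounded-treewidth decomposition then yields an FPT algorithm in $k$.

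The main obstacle, if one wanted to avoid Courcelle and do an explicit DP, is that distance constraints up to $k+1$ force the DP state at each bag to record not merely the color of each bag vertex but also enough information about nearby colored vertices outside the bag (within distance $k$) to enforce packing constraints across the decomposition. In bounded treewidth this still gives states of size bounded by a function of $k$ alone, so FPT is preserved; but the cleanest route is the $\MSOo$ route, which sidesteps these bookkeeping details entirely. Everything else is routine once the clique-bound observation is made.
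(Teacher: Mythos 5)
Your argument is correct, and it matches the standard derivation of this result: the paper itself does not reprove Proposition~\ref{chordfpt} (it is cited from Fiala and Golovach~\cite{FG10}), but your route --- a packing $k$-coloring forces $\omega(G)\le k$, chordality then gives treewidth at most $k-1$ via a clique tree, and for fixed $k$ the property is \MSOo-expressible so Courcelle's theorem applies --- is essentially the argument behind the cited result and is consistent with the paper's own remark that {\sc packing $k$-coloring} is \MSOo-expressible for fixed $k$. The only soft spot is your aside about the explicit dynamic program (the number of colored vertices near a bag need not be bounded, so the state would have to record per-color distance information rather than the vertices themselves), but since you ultimately rely on the Courcelle route this does not affect correctness.
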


\begin{theorem}\label{unitint}
Packing chromatic number for unit interval graphs with a largest clique of size at most $k$ is \FPT in $k$.
\end{theorem}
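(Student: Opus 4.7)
The plan is to reduce to Proposition~\ref{chordfpt}. Since unit interval graphs are chordal, once I establish $\chi_P(G)\le B(k)$ for a function $B$ of $k$ alone, Proposition~\ref{chordfpt} computes $\chi_P(G)$ in $f(B(k))\cdot\mathrm{poly}(n)$ time, which is FPT in $k$. So the entire task reduces to bounding $\chi_P(G)$ by a function of $k$.

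To do this, I would embed $G$ into $P_n^{k-1}$, the $(k-1)$-th power of the path. Fix an indifference ordering $v_1,\ldots,v_n$ of $G$ by left endpoint: if $v_iv_j\in E(G)$ with $i<j$, then every intermediate interval also overlaps both $v_i$ and $v_j$, so $\{v_i,\ldots,v_j\}$ is a clique and $j-i\le k-1$. Hence $G$ is a spanning subgraph of $P_n^{k-1}$, and since adding edges only shortens distances, every packing coloring of $P_n^{k-1}$ is also a packing coloring of $G$, giving $\chi_P(G)\le \chi_P(P_n^{k-1})$. Now partition the vertex set of $P_n^{k-1}$ by index modulo $k-1$ into classes $A_0,\ldots,A_{k-2}$; two vertices in $A_r$ differ by a positive multiple of $k-1$ in index, so $A_r$ induces a path in $P_n^{k-1}$, and the $P_n^{k-1}$-distance between the $a$-th and $b$-th elements of $A_r$ equals $\lceil |a-b|(k-1)/(k-1)\rceil=|a-b|$, matching the path distance in $A_r$. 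Apply Lemma~\ref{lem:path} to each $A_r$ with parameter $s_r$, choosing $s_0=1$ and $s_{r+1}=3s_r+3$ so that the color palettes $\{s_r,s_r+1,\ldots,3s_r+2\}$ are pairwise disjoint. Pasting these colorings yields a packing coloring of $P_n^{k-1}$ with at most $B(k):=3s_{k-2}+2=O(3^k)$ colors: intra-class constraints are handled by the Lemma together with the distance-matching observation, while inter-class collisions cannot arise because the palettes are disjoint.

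The main technical points to verify are the distance-matching claim for $A_r$ inside $P_n^{k-1}$ and the fact that the recursive choice $s_{r+1}=3s_r+3$ keeps the palettes genuinely disjoint; these are both short computations, and no essential difficulty beyond them is expected. With $\chi_P(G)\le B(k)$ in hand, invoking Proposition~\ref{chordfpt} on $G$ with budget $K=B(k)$ then decides the exact value of $\chi_P(G)$ in time $f(B(k))\cdot\mathrm{poly}(n)$, which is FPT in $k$ as required.
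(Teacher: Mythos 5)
Your proof is correct and follows essentially the same route as the paper: decompose the vertex set into $O(k)$ classes on which graph distances dominate path distances, color each class with a disjoint palette via Lemma~\ref{lem:path} to get $\chi_P(G)=O(3^k)$, and finish by invoking Proposition~\ref{chordfpt}. The only difference is in how the classes are obtained --- the paper takes the $k$ independent sets of a proper coloring (using perfection), while you take residue classes modulo $k-1$ in the indifference ordering after embedding $G$ as a spanning subgraph of $P_n^{k-1}$ --- but both hinge on the same property of the unit interval order and yield the same bound up to a constant.
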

\begin{proof}
Let $G$ be a unit interval graph. As $G$ is perfect, we can find a partition of its vertex set into $k$ independence sets $X_1,\ldots,X_k$ in polynomial time.
Let $X_\ell = \{ v_1,v_2,\ldots,v_{|X_\ell|} \}$, where the $v_i$ are ordered corresponding to their interval representation. Note that for all $i < j$, 
the distance of $v_i$ and $v_j$ in $G$ is at least $j-i$. This implies that any packing coloring of a path on $|X_\ell|$ vertices can be used to packing color the set $X_\ell$ without conflicts.

Use Lemma~\ref{lem:path} to color each $X_\ell$ with colors $\{\frac52 (3^{\ell-1}-1)+1,\ldots, \frac52 (3^{\ell}-1)\}$, and notice that these color sets are disjoint. This yields a packing coloring of $G$ with at most $\frac52 (3^{k}-1)$ colors.
Therefore, the number of colors we need is bounded in terms of $k$, and we can apply
Theorem~\ref{chordfpt} to conclude the proof.
\end{proof}

In the previous argument, we saw that restricting the number of colors makes the problem simpler. While we obviously do not have such a restriction for all interval graphs, we can still achieve a result about partial packing colorings with a bounded number of colors along similar ideas.

\begin{theorem}\label{thm:dynamic}
Let $k$ be fixed and $G$ be an interval graph. Finding a partial coloring by colors $1,\ldots,k$ that is maximizing the number of colored vertices can be solved in time  $O(n^{k+2})$.
\end{theorem}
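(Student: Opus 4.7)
The plan is to design a dynamic programming algorithm processing the intervals from left to right. I first order the vertices by right endpoint as $v_1, v_2, \ldots, v_n$ (assuming, by a standard perturbation, that all endpoints are distinct), and precompute all pairwise distances of $G$ in time $O(n^3)$ via BFS from every vertex. The DP state will remember, for each color $c \in \{1,\ldots,k\}$, the index of the most recently colored vertex of color $c$ (or $0$ if color $c$ is unused); call this tuple $(p_1,\ldots,p_k)$.

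The key structural ingredient is the following monotonicity lemma: for indices $a < b \le i$ in this order, $d(v_b, v_i) \le d(v_a, v_i)$. To prove it, consider a greedy shortest $v_a$-$v_i$ path $v_a = w_0, w_1, \ldots, w_d = v_i$ in which each $w_{j+1}$ is a neighbor of $w_j$ with the largest possible right endpoint; along such a path the right endpoints are strictly increasing. Let $j^\ast$ be the smallest index with $r_{w_{j^\ast}} \ge r_{v_b}$; this exists and satisfies $j^\ast \ge 1$ since $r_{w_0} = r_{v_a} < r_{v_b}$. Then $l_{w_{j^\ast}} \le r_{w_{j^\ast-1}} < r_{v_b}$ and $r_{w_{j^\ast}} \ge r_{v_b} \ge l_{v_b}$, so $v_b$ and $w_{j^\ast}$ overlap in $G$, giving a walk $v_b, w_{j^\ast}, w_{j^\ast+1}, \ldots, v_i$ of length $d - j^\ast + 1 \le d$.

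Given the lemma, the DP is immediate. Let $T[i;\, p_1,\ldots,p_k]$ be the maximum number of colored vertices over all partial packing $k$-colorings of $\{v_1,\ldots,v_i\}$ whose latest color-$c$ vertex is $v_{p_c}$. Transitions at step $i+1$ either leave $v_{i+1}$ uncolored (state and value unchanged) or color it with some $c$ satisfying $p_c = 0$ or $d(v_{p_c}, v_{i+1}) \ge c+1$, setting $p_c := i+1$ and adding $1$ to the value. The lemma ensures that comparing $v_{i+1}$ only against the single latest color-$c$ vertex is sufficient, because any earlier color-$c$ vertex is at least as far from $v_{i+1}$. The answer is $\max_{(p_1,\ldots,p_k)} T[n;\, p_1,\ldots,p_k]$. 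There are $O(n^{k+1})$ table entries, each filled from $O(k)$ predecessors in $O(1)$ using the precomputed distances, which fits within the claimed $O(n^{k+2})$ bound. The main obstacle is the monotonicity lemma: without it, the DP would need to track more than one previously colored vertex per color, and the state space would blow up beyond what is polynomial in $n$ for fixed $k$.
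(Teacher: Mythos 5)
Your proposal is correct and is essentially the paper's own (very terse) argument: the paper's function $H(u_1,\ldots,u_k)$, indexed by the rightmost-right-endpoint vertex of each color class, is exactly your DP state $(p_1,\ldots,p_k)$. You additionally supply the monotonicity lemma ($a<b\le i$ implies $d(v_b,v_i)\le d(v_a,v_i)$ in right-endpoint order) that justifies tracking only one vertex per color, which the paper leaves implicit; your proof of that lemma is sound (and in fact works for an arbitrary shortest path, so the greedy-path detour is unnecessary).
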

\begin{proof}
We compute a function $H(u_1,\ldots,u_k) \to \mathbb{N}$, which counts the maximum number of colored vertices such that $u_i$ has its interval with the right end-point most to the right among all vertices colored by color $i$. 
The domain of $H$ is $(V\cup \{N\})^k$, where $N$ is a symbol representing that a particular color was not used at all.
It is possible to compute $H$ using dynamic programming in time $O(n^{k+2})$.
\end{proof}
Notice that Theorem~\ref{thm:dynamic} implies Theorem~\ref{thm:diameter} with a smaller exponent in the running time.

\section{Structural parameters}\label{sec:param}

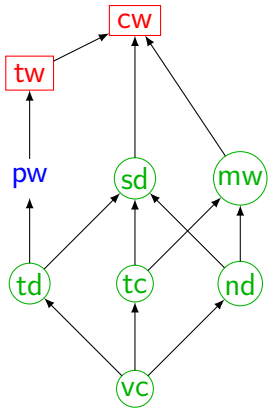
\begin{figure}[ht]
  \begin{minipage}[c]{0.35\textwidth}
\begin{center}
\tikzset{vtxR/.style={rectangle, draw,color=red, inner sep=3pt, outer sep=0pt}} 
\tikzset{vtxG/.style={circle, draw,color={green!70!black}, inner sep=0.8pt, outer sep=0pt}} 
\tikzset{vtxB/.style={color=blue, inner sep=0.5pt, outer sep=3pt}} 
\begin{tikzpicture}[scale=1.4]
\draw 
(0,0) node[vtxG](vc){$\mathsf{vc}$}
(1,1) node[vtxG](nd){$\mathsf{nd}$}
(1,2) node[vtxG](mw){$\mathsf{mw}$}
(-1,1) node[vtxG](td){$\mathsf{td}$}
(-1,2) node[vtxB](pw){$\mathsf{pw}$}
(-1,3) node[vtxR](tw){$\mathsf{tw}$}
(0,3.5) node[vtxR](cw){$\mathsf{cw}$}
(0,1) node[vtxG](tc){$\mathsf{tc}$}
(0,2) node[vtxG](sd){$\mathsf{sd}$}
;
\draw[-latex](vc)--(nd);
\draw[-latex](nd)--(mw);
\draw[-latex](vc)--(tc);
\draw[-latex](mw)--(cw);
\draw[-latex](tc)--(sd);
\draw[-latex](nd)--(sd);
\draw[-latex](tc)--(mw);
\draw[-latex](sd)--(cw);
\draw[-latex](vc)--(td);
\draw[-latex](td)--(sd);
\draw[-latex](td)--(pw);
\draw[-latex](pw)--(tw);
\draw[-latex](tw)--(cw);
\end{tikzpicture}
\end{center}
  \end{minipage}\hfill
  \begin{minipage}[c]{0.63\textwidth}
\caption{Hierarchy of graph parameters. An arrow indicates that a graph parameter upper-bounds the other. Thus, hardness results are implied in direction of arrows and algorithms are implied in the reverse direction.
Green circles and red rectangle colors distinguish between hardness results and \FPT algorithms provided. 
Blue color without boundary denotes that the hardness is unknown.
($\mathsf{cw}$ is clique width, $\mathsf{nd}$ is neighborhood diversity, $\mathsf{mw}$ is modular width, $\mathsf{pw}$ is path width, $\mathsf{sd}$ is shrub depth, $\mathsf{tc}$ is twin cover, $\mathsf{td}$ is tree depth, $\mathsf{tw}$ is tree width, $\mathsf{vc}$ is vertex cover.
See~\cite{param} for definitions.)
}
\end{minipage}
\label{fig:classes}
\end{figure}

\begin{theorem}\label{thm:CMR}
Let $k$ be fixed and $G=(V,E)$ be a graph of clique width $q$. Finding a partial packing coloring by colors $1,\ldots,k$ that is maximizing the number of colored vertices can be solved in \FPT time parameterized by $q$.
\end{theorem}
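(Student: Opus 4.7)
The plan is to express ``partial packing coloring by colors $1,\ldots,k$ maximizing the number of colored vertices'' as an MSO$_1$-definable optimization problem with a linear objective, and then invoke the optimization extension of the Courcelle--Makowsky--Rotics theorem on graphs of bounded clique-width.

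First, I would encode a partial coloring by a tuple of pairwise disjoint vertex sets $(V_1,\ldots,V_k)$, where $V_c$ is the set of vertices assigned color $c$. The objective becomes the linear function $\sum_{c=1}^{k} |V_c|$, which is exactly the quantity we wish to maximize.

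Next, I would write a single MSO$_1$ formula $\Phi(V_1,\ldots,V_k)$ stating that the tuple is a valid partial packing coloring. Disjointness is immediate in MSO$_1$. The nontrivial ingredient is the packing constraint ``any two distinct vertices in $V_c$ have distance at least $c+1$.'' Since $k$ is fixed, so is each $c\leq k$, and the relation $\mathrm{dist}_{\leq c}(u,v)$ can be written as the first-order formula
\[
\exists w_0,\ldots,w_c \left(w_0 = u \wedge w_c = v \wedge \bigwedge_{i=0}^{c-1} (w_i = w_{i+1} \vee E(w_i,w_{i+1}))\right),
\]
i.e., with a bounded number of vertex variables and no set quantification beyond the free $V_c$. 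Negating this gives the packing condition for color $c$, and conjoining over all $c\in\{1,\ldots,k\}$ together with disjointness yields a fixed MSO$_1$ formula $\Phi$ whose size depends only on $k$.

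Finally, I would apply the optimization version of the Courcelle--Makowsky--Rotics theorem: for any fixed MSO$_1$ formula $\Phi(X_1,\ldots,X_r)$ with free monadic variables and any linear evaluation function of the form $\sum_i a_i |X_i|$, the maximum over satisfying assignments is computable in time $f(q,|\Phi|)\cdot\mathrm{poly}(n)$ on graphs of clique-width at most $q$. Applied to our $\Phi$ and coefficients $a_c = 1$, this gives an FPT algorithm parameterized by $q$. The only delicate point is verifying that the bounded-distance predicate genuinely lies in MSO$_1$; this is immediate because it uses only $O(k)$ first-order variables, so the main ``obstacle'' is really just bookkeeping rather than a conceptual hurdle.
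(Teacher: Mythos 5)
Your proposal is correct and follows essentially the same route as the paper: express the validity of the partial packing coloring as a fixed \MSOo formula (using the fact that for fixed $k$ each bounded-distance predicate is first-order definable with $O(k)$ vertex variables), then invoke the Courcelle--Makowsky--Rotics linear optimization extension on graphs of clique-width $q$. The only cosmetic difference is that the paper minimizes the size of a single free variable $X$ of uncolored vertices rather than maximizing $\sum_c |V_c|$; the two objectives are equivalent.
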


\begin{proof}
We model the problem as an extended formulation in \MSOo logic with one free variable $X$ that represents the large colors. We use a result by Courcelle, Makowski and Rotics~\cite{CMR} to solve this formula $\varphi(X)$ on graphs of clique width $q$ in \FPT time such that it minimize the size of the set $X$.

$$\varphi(X)\models  \exists X_1,\dots X_k\subseteq V \text{ s.t. } \forall i\ \text{$i$-independent}(X_i) \wedge V=X\dot\cup X_1 \dot\cup\cdots\dot\cup X_k. $$
$$\text{$i$-independent}(X)\models   \forall x,y\in X  \  d(x,y)\ge i.$$

$$d(x,y)\ge i\models \nexists z_1,\ldots,z_{i-1}\in V \text{ s.t. } x=z_1 \wedge y=z_{i-1}\wedge  \cup_{j=1}^{i-2}(\text{edge}(z_j,z_{j+1})\vee (z_j=z_{j+1})).$$
\end{proof}

\section{Conclusion}
Although the diameter is a widely investigated structural parameter we found that in some cases a related parameter better captures the problem, namely the number of colors that can be used more than once, as we show in Theorem~\ref{thm:dynamic}.

We close with a few open questions.
\begin{question}
Is determining the packing chromatic number for (unit) interval graphs in $\Pp$ or is it $\NP$-hard? 
\end{question}

\begin{question}
Is determining the packing chromatic number for interval graphs \FPT when parametrized by the largest clique size? 
\end{question}

One can think of graphs of bounded path-width as a generalization of interval graphs with bounded clique size. 
This leads to the following question.
\begin{question}\label{que:pw}
Is determining the packing chromatic number \FPT or \XP when parametrized by the path width?
\end{question}

Notice that Theorem~\ref{unitint} could be modified to work on graphs of bounded path width that have a decomposition such that every vertex is in a bounded number of bags.

\noindent\paragraph{Acknowledgements} 
All authors were supported in part by NSF-DMS Grants \#1604458, \#1604773, \#1604697 and \#1603823,``Collaborative Research: Rocky Mountain - Great Plains Graduate Research Workshops in Combinatorics''.

\bibliographystyle{abbrv}
\bibliography{refs.bib}

\begin{thebibliography}{10}

\bibitem{Balogh}
J.~Balogh, A.~Kostochka, and X.~Liu.
\newblock Packing chromatic number of subcubic graphs, 2017.

\bibitem{Bresar}
B.~Bre\v{s}ar, S.~Klav\v{z}ar, and D.~F. Rall.
\newblock On the packing chromatic number of {C}artesian products, hexagonal
  lattice, and trees.
\newblock {\em Discrete Appl. Math.}, 155(17):2303--2311, 2007.

\bibitem{courcelle}
B.~Courcelle.
\newblock The monadic second-order logic of graphs. {I}. recognizable sets of
  finite graphs.
\newblock {\em Inf. Comput.}, 85(1):12--75, 1990.

\bibitem{CMR}
B.~Courcelle, J.~A. Makowsky, and U.~Rotics.
\newblock Linear time solvable optimization problems on graphs of bounded
  clique-width.
\newblock {\em Theory of Computing Systems}, 33(2):125--150, 2000.

\bibitem{param}
M.~Cygan, F.~V. Fomin, L.~Kowalik, D.~Lokshtanov, D.~Marx, M.~Pilipczuk,
  M.~Pilipczuk, and S.~Saurabh.
\newblock {\em Parameterized Algorithms}.
\newblock Springer, 2015.

\bibitem{FG10}
J.~Fiala and P.~A. Golovach.
\newblock Complexity of the packing coloring problem for trees.
\newblock {\em Discrete Appl. Math.}, 158(7):771--778, 2010.

\bibitem{FKL}
J.~Fiala, S.~Klav\v{z}ar, and B.~Lidick\'y.
\newblock The packing chromatic number of infinite product graphs.
\newblock {\em European J. Combin.}, 30(5):1101--1113, 2009.

\bibitem{mw}
J.~Gajarsk{\'y}, M.~Lampis, and S.~Ordyniak.
\newblock {\em Parameterized Algorithms for Modular-Width}, pages 163--176.
\newblock Springer International Publishing, Cham, 2013.

\bibitem{sd}
R.~Ganian, P.~Hlin{\v{e}}n{\'y}, J.~Ne{\v{s}}et{\v{r}}il,
  J.~Obdr{\v{z}}{\'a}lek, and P.~O. de~Mendez.
\newblock Shrub-depth: Capturing height of dense graphs.
\newblock {\em arXiv preprint arXiv:1707.00359}, 2017.

\bibitem{GHHHR}
W.~Goddard, S.~M. Hedetniemi, S.~T. Hedetniemi, J.~M. Harris, and D.~F. Rall.
\newblock Broadcast chromatic numbers of graphs.
\newblock {\em Ars Combin.}, 86:33--49, 2008.

\bibitem{Haastad1999}
J.~H\aa{}stad.
\newblock Clique is hard to approximate within {$n^{1-\epsilon}$}.
\newblock {\em Acta Math.}, 182(1):105--142, 1999.

\bibitem{Barnaby}
B.~Martin, F.~Raimondi, T.~Chen, and J.~Martin.
\newblock The packing chromatic number of the infinite square lattice is
  between 13 and 15.
\newblock {\em Discrete Appl. Math.}, 225:136--142, 2017.

\bibitem{Sloper}
C.~Sloper.
\newblock An eccentric coloring of trees.
\newblock {\em Australas. J. Combin.}, 29:309--321, 2004.

\bibitem{Soukal}
R.~Soukal and P.~Holub.
\newblock A note on packing chromatic number of the square lattice.
\newblock {\em Electron. J. Combin.}, 17(1):Note 17, 7, 2010.

\end{thebibliography}

\end{document}